\DeclareSymbolFontAlphabet{\mathbb}{AMSb}
\newtheorem{theorem}{Theorem}[section]
\newtheorem{lemma}[theorem]{Lemma}
\begin{document}
%
\title{Scalable String Reconciliation by Recursive Content-Dependent Shingling}
%
%
%

\author{Bowen Song
        and Ari Trachtenberg, \{sbowen,trachten\}@bu.edu\\Department of Electrical and Computer Engineering, Boston University
}

\maketitle


\begin{abstract}
We consider the problem of reconciling similar, but remote, strings with minimum communication complexity.  This ``string reconciliation" problem is a fundamental building block for a variety of networking applications, including those that maintain large-scale distributed networks and perform remote file synchronization.  We present the novel Recursive Content-Dependent Shingling (RCDS) protocol that is computationally practical for large strings and scales linearly with the edit distance between the remote strings.  We provide comparisons to the performance of \emph{rsync}, one of the most popular file synchronization tools in active use.  Our experiments show that, with minimal engineering, RCDS outperforms the heavily optimized \emph{rsync} in reconciling release revisions for about 51\% of the 5000 top starred git repositories on GitHub.  The improvement is particularly evident for repositories that see frequent, but small, updates.
\end{abstract}


%
\IEEEpeerreviewmaketitle

\graphicspath{{Figures/}}

\section{Introduction}
%
%
%
%


%

\label{sec:motivation}

Cloud-based storage systems such as Google Drive and Dropbox provide users with extra storage space, backup and synchronization, and file sharing across different devices. These files shared on different machines inspire cooperation and allow different users to view and edit the same files concurrently. Behind the scenes, cloud storage services utilize Distributed File Systems (DFS) software such as Google File System (GFS) \cite{ghemawat2003googlefilesys}, Andrew File System (AFS) \cite{andrew_file_sys}, and Ceph \cite{weil2006ceph} to support file availability, consistency, and fault tolerance demands.

These cloud-based file systems globally push updates, whenever available, to all client machines.  However, most of the file updates are actually initiated by client devices such as personal laptops, smartphones, and tablets, which often rely on networks with interruptible or constrained bandwidth (e.g., wireless or ad-hoc networks).  When disconnected from the network or from specific peers, client devices maintain asynchronous edits in cached files, which may end up diverging from their cloud copies. The system complexity naturally grows with the number of files shared and the number of users/devices asynchronously performing edits.  The basic synchronization primitive, downloading and uploading the entire changed files, may require a (relatively) significant amount of bandwidth over an extended amount of time, even though the file updates themselves are often incremental, involving only a small number of changes at a time. It is thus advantageous to exploit the similarities between different versions of files and only communicate the file differences to perform updates and synchronizations. Indeed, an efficient string reconciliation protocol that minimizes overall communication in this manner could serve as a natural foundation for optimizing such systems.

As a further example of the underlying problem, consider two physically separated hosts, Alice and Bob, which are attempting to reconcile their local versions of a string using the minimum amount of communication.  Alice has the string ``snack," and Bob has ``snake," and their objective is for Alice to obtain just enough information from Bob to assemble the string ``snake".   The simplest solution, known as \emph{Full Sync}, involves transferring Bob's string in its entirety to Alice.  However, if Bob has some knowledge of Alice's string, perhaps he could more succinctly tell her to remove the `c' and append an `e' to her string.


The problem of string reconciliation applies to a variety of different large-scale network applications, especially those with bandwidth, consistency or availability constraints. For example, distributed file systems have to maintain file consistency between client and cloud storage and synchronize data between internal storage nodes. 

\subsection{State of the Art} 
The problem of string reconciliation is directly applicable to file synchronization which has inspired a rich body of work including~\cite{ari_bandwidth_2006,efficiently_decoding_strings_ari_2012}, which are based on shingling and ~\cite{cormode2000communication,ma2012compression,dolecek2016synchronizing,venkataramanan2013efficient}, which rely on error-correcting codes.  These protocols are typically most efficient for reconciling small strings or strings with specific structures of edits. More general algorithms such as \emph{rsync}~\cite{rsyncFromStringRecon} are capable of efficiently reconciling larger strings; however, they suffer a communication cost that scales linearly with input string length.  A similar problem also exists in protocols that use content-dependent chunking such as~\cite{low_latency_file_sync}. 


\subsection{Approach}
We approach the problem of string reconciliation by reducing it to the well-known problem of set reconciliation~\cite{cpi_set_rec_near_optimal_comm_complexity_ari}. The key to the reduction is representing a string as a \emph{multiset} of substrings, together with enough information to be unambiguously reconstructed.  A key challenge is to keep the number of symmetric differences between the \emph{multiset} of substrings close to that of the dissimilarity metric of the underlying strings. We use \emph{edit distance} metric to quantify the differences between two strings, and then design a reconciliation protocol that seeks to minimize (i) the total number of bytes transmitted, (ii) the number of rounds of communication, and (iii) the overall computation complexity of the process.  

\subsection{Contributions}
We propose a new scalable string reconciliation protocol with both an efficient communication complexity and a scalable computational complexity, two features that are not simultaneously available in existing approaches. Our main contributions are thus:
\begin{enumerate}
    \item Designing a string reconciliation solution with sub-linear communication complexity for many strings, and scalable computational complexity.
    
    \item Analysis of our approach, including  failure probability and expected/worst-case communication and computation complexities.
    
    \item Implementation as a synchronization utility with a comparison to the popular \emph{rsync} protocol.
\end{enumerate}

\subsection{Road map}
Section \ref{sec:background} includes some related work. Section \ref{sec:setsofcontent_protocol} presents our novel string reconciliation protocol. Sections \ref{sec:Analysis} and \ref{sec:evaluation} presents analysis and experiments of the protocol against different parameters and inputs. Section \ref{sec:compare_existing_work} compare the performance of our protocol with that of \emph{rsync}. Finally, Sections \ref{sec:conclusions} states our conclusion and directions of future work.

\section{Related Work}
\label{sec:background}

We describe the challenges of set/string reconciliation and explore relevant past works. In addition, we also present  bandwidth-efficient string reconciliation protocols including a detailed description of the popular \emph{rsync} algorithm~\cite{rsyncFromStringRecon}.

\subsection{Set Reconciliation}
\label{sec:set_recon}
The problem of set reconciliation serves as the foundation of our string reconciliation approach. We define the problem of set reconciliation as shown in Figure \ref{fig:setReconConcept}. Alice and Bob are hosts with sets of data $S_A$ and $S_B$ respectively. The goal is to determine the union of the two sets and transfer their symmetric differences, elements $C$ and $D$ in our figure, to achieve data consistency. The challenge is to minimize the total amount of communication, namely the total amount of bits exchanged between Alice and Bob.

\begin{figure}[!t]
    \centering
    \includegraphics[width=\linewidth]{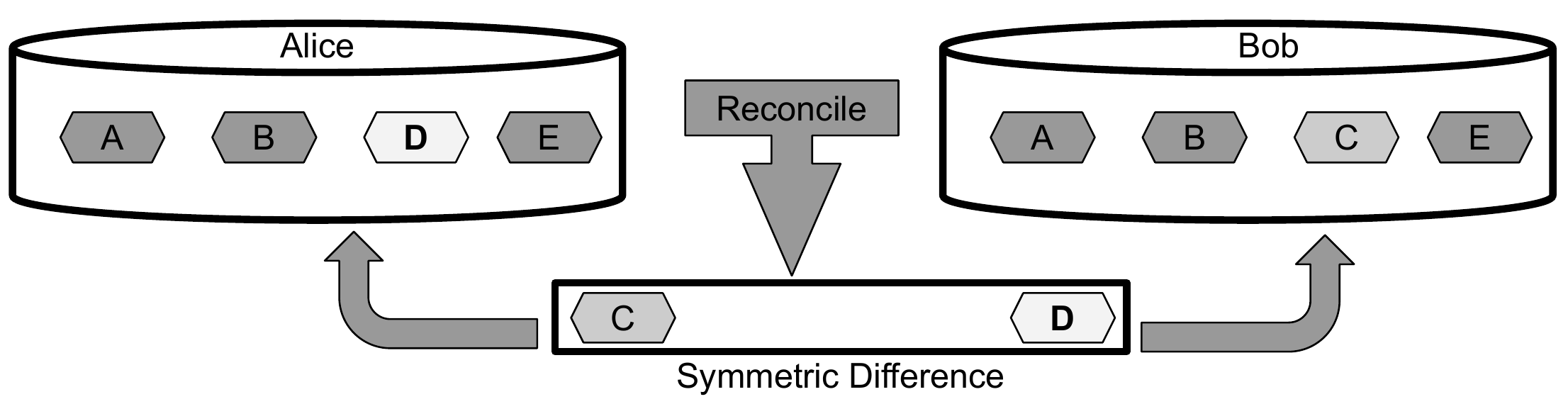}
    \caption{The problem of set reconciliation.}
    \label{fig:setReconConcept}
\end{figure}

The \emph{Characteristic Polynomial Interpolation} (CPI) \cite{cpi_set_rec_near_optimal_comm_complexity_ari} protocol encodes a host's set into a characteristic polynomial and reconciles it with a client based on evaluations of this polynomial at a collection of evaluation points. While the communication cost of CPI is linear to the number of set symmetric differences $m$, its computation complexity is cubic of $m$, which makes it only feasible to reconcile sets with a small number of symmetric differences.

\emph{Interactive CPI} \cite{practical_set_reconciliation_ari_2002} extends from \emph{CPI} to achieve linear computation complexity to $m$. It uses the divide-and-conquer technique to recursively partition the reconciling sets into disjoint space buckets, and reconciles the buckets using \emph{CPI}. The key idea is to probabilistically reduce the symmetric differences in each sub-partition until \emph{CPI} completes successfully. The full collection of reconciled differences is the union of all recovered differences.

The \emph{Invertible Bloom Lookup Table}~\cite{goodrich2011IBLT} (IBLT) is a probabilistic data structure that can be used for set reconciliation. The IBLT compacts the key-value pairs of a set of data into a table. Each set element is hashed and XOR-ed into a number of places. The IBLT can be subtracted by another one to get rid of common elements. To extract differences, the IBLT goes through a peeling process that has a high probability of retrieving all set differences. The IBLT has a communication cost linear to $m$, however, with a much larger constant than that of CPI.

\subsection{String Reconciliation}
\label{sec:str_recon}

The \emph{STRING-RECON} protocol~\cite{ari_bandwidth_2006} extracts all $k$-length substrings (k-shingles) of an input string and uses a set reconciliation protocol to exchange different shingles. It reconstructs the string from the set of reconciled shingles using an exhaustive \emph{backtracking method}. Unfortunately, the \emph{STRING-RECON} protocol has a challenging computational complexity and is not scalable to reconcile large string inputs. 

The \emph{Uniquely Decodable Shingles}~\cite{efficiently_decoding_strings_ari_2012} algorithm is an extension on \emph{STRING-RECON}, which tries to reduce the protocol's computation complexity. Upon reconciling a set of shingles, the online algorithm checks if the set of shingles can be uniquely decoded into a string by using a simple \emph{Depth First Search} and merges necessary shingles to create a uniquely decodable shingle set. The algorithm has a content-dependent characteristic based on string inputs and is not guaranteed to bring down the computational cost.

The \emph{Delta-Sync} protocol~\cite{Imrpoved_fie_sync_suel2004} recursively partitions and finds matches between two reconciling strings using hash values, then responds with a bit-map containing this information. After the matching stage, it sends unmatched data and reconstructs the string based on the bit-map. The \emph{Delta-Sync}, while suitable for reconciling large input strings, is not wildly used by the general public and does not have a sufficiently optimized implementation yet.

The \emph{rsync} algorithm~\cite{rsyncFromStringRecon} is scalable to large file sizes and is efficient for synchronizing files with great edit distance. The algorithm uses fixed partition and rolling checksum to match data between two strings and reconcile their differences. Between Two separated hosts Alice and Bob who have strings $\sigma_A$ and $\sigma_B$ respectively, Alice wishes to update her string to match Bob's $\sigma_B$. Alice would first compute rolling checksums for every $w$ sized partitions of $\sigma_A$. She sends all checksums to Bob where Bob tries to find all the matching partitions in $\sigma_B$ by calculating the checksums of all his partitions and their offsets. For every matching partition, Bob sends edit instructions for its preceding unmatched data to Alice.


Since creation, \emph{rsync} has been successfully implemented in many systems that require consistent file duplicates. \emph{Rsync} is currently used in places such as DFS client software that synchronizes files on client hosts with copies in the DFS \cite{distributed_file_sys_sync_survey}, GitLab to maintain git repositories \cite{gitlab}, and inside of Google Cloud to synchronize and migrate data between \emph{storage buckets} \cite{gsutil}. \emph{Rsync} is also a standard Linux synchronization utility included in every popular distribution serving local and remote file synchronization. We compare the performance of our protocol to that of the \emph{rsync} utility~\cite{rsyncWeb} and show the superiority of our protocol under certain circumstances in Section \ref{sec:compare_existing_work}.

\section{String Reconciliation via Recursive Content-Dependent Shingling (RCDS)}
\label{sec:setsofcontent_protocol}

Our \emph{Recursive Content-Dependent Shingling} protocol exploits content similarities between two reconciling strings. This protocol distinguishes differences between input strings by reconciling their multisets of partition hashes and transfers the partitions that are unknown to the other party. We use a \emph{local minimum chunking} technique, based on a method in~\cite{Content_dependent_Partition}, to partition the input string from non-strict local minima of content hash values. By using 64-bit hash values to represent string partitions, we create a multiset of hash shingles by concatenating hash values of two adjacent partitions and compose a de Bruijn digraph by turning shingles into vertex/edge pairs. We use a set reconciliation protocol, such as \emph{Interactive CPI} \cite{practical_set_reconciliation_ari_2002}, to reconcile the hash shingle multiset with its counterpart and then transfer substrings corresponding to the hash values unknown to the other party. Finally, we use an exhaustive \emph{Backtracking} method \cite{skiena1995reconstructing_strings_from_substrings} to uniquely decode the multiset of shingles back to a string. We recursively partition the string to create subpartitions and only transfer unmatched partitions at the bottom level of the partition tree to reduce communication cost.

\subsection{Underlying Data Structures}
\label{sec:underlying_struct}
Our protocol maintains a \emph{content-dependent partition} hierarchy based on the \emph{p-ary} tree data structure where $p$ is the maximum number of children a node can have. We build the tree from partitioning a string recursively using a \emph{local minimum chunking} method, and each partition corresponds to a node in the tree. Our protocol requires reconciling strings partitioned in a similar manner using the same parameters to increase the chance of common partitions.

\subsubsection{Local Minimum Chunking}
\label{sec:local_min_chunking}
We adapt the approach from \cite{Content_dependent_Partition} to our \emph{local minimum chunking} method to partition a given string based on its content hash values. We obtain the content hash values by a rolling hash algorithm, as described in Figure \ref{fig:rolling_hash}, which can produce pseudo-random values based on the context of the input string using a fixed window size $w$ and hash space $s$. We move the rolling hash window through the input string one character at a time and hash the $w$-length substring inside the window to a number in the space $s$. The resulting hash value array would be $N-w+1$ in length, where $N$ is the number of characters in the input string. 

\begin{figure}[!t]
    \centering
    \includegraphics[width=\linewidth]{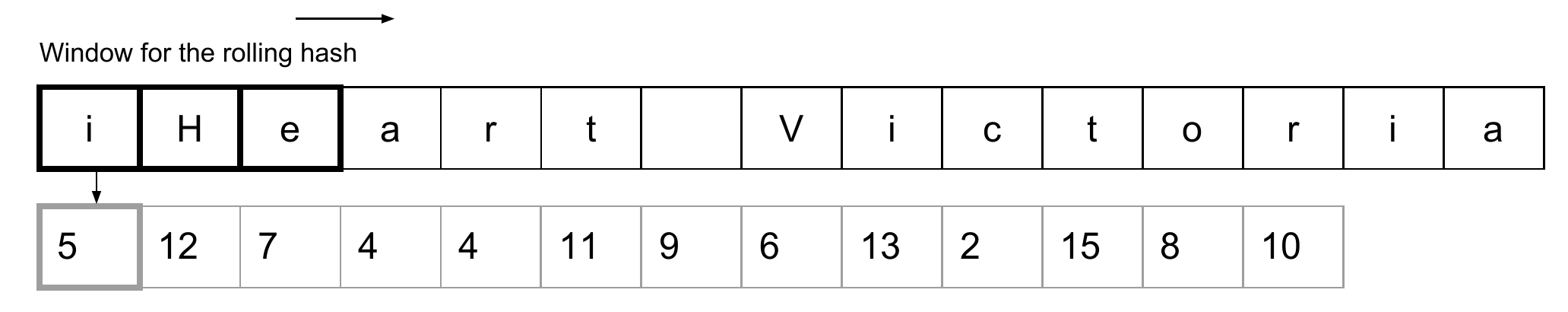}
    \caption{An example of getting content hash values using the rolling hash algorithm with window size $w=3$ and hash space $s=16$.}
    \label{fig:rolling_hash}
\end{figure}

We then partition the string based on the hash value array using a minimum inter-partition distance $h$ to control the minimum chunking size. Potential partition places include all locations of non-strict local minima considering $h$ hash values in both directions. Shown in Figure \ref{fig:content_dep_partition}, there are $3$ values in the array that satisfy as non-strict local minimum hashes for $h=2$. However, the second hash value $4$ is not a valid partition spot. Using the example hash value array, we obtain $3$ partitions for the phase ``iHeart Victoria". Besides partition decisions, $h$ also defines the upper-bound on the number of partitions of a string, $p=\left \lfloor{ \frac{N-w}{h} }\right \rfloor$.
\begin{figure}[!t]
    \centering
    \includegraphics[width=\linewidth]{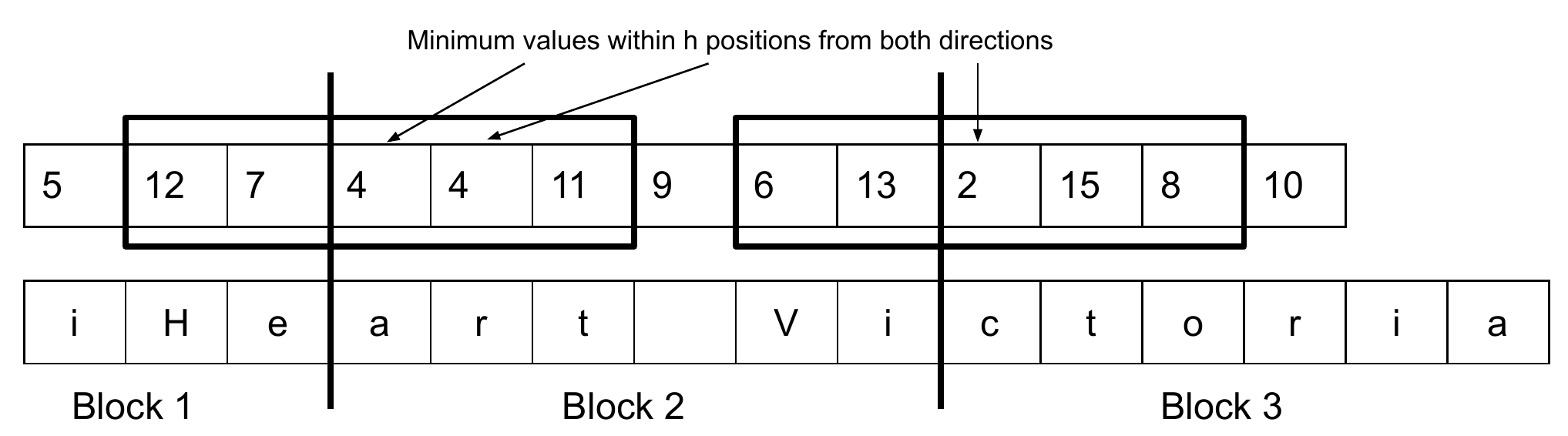}
    \caption{An example of content-dependent chunking using minimum inter-partition distance $h=2$.}
    \label{fig:content_dep_partition}
\end{figure}

\subsubsection{Hash Partition Trees}
\label{sec:hash_par_tree}
We build a hash partition tree based on string partitions from the \emph{local minimum chunking} method and hash each partition by a 64-bit hash function, $H(\cdot)$, to store in a \emph{p-ary} tree. From our previous example, we create $3$ partitions at the first level of the partition tree, shown in Figure \ref{fig:Partition_Tree}. We can grow the partition tree by creating subpartitions recursively for $L$ levels. Since we are unlikely to partition a substring further with the same parameters, we reduce the partition parameters, $s$ and $h$, at every next recursive level by $p$: $s = wp^{L-l+1}$ and $h = N/p^{l}$, where $l = \{1,2, ... , L\}$. Figure \ref{fig:Partition_Tree} shows a 2-level hash partition tree where we refer the second level partitions as \emph{terminal strings}.

\begin{figure}[!t]
    \centering
    \includegraphics[width=\linewidth]{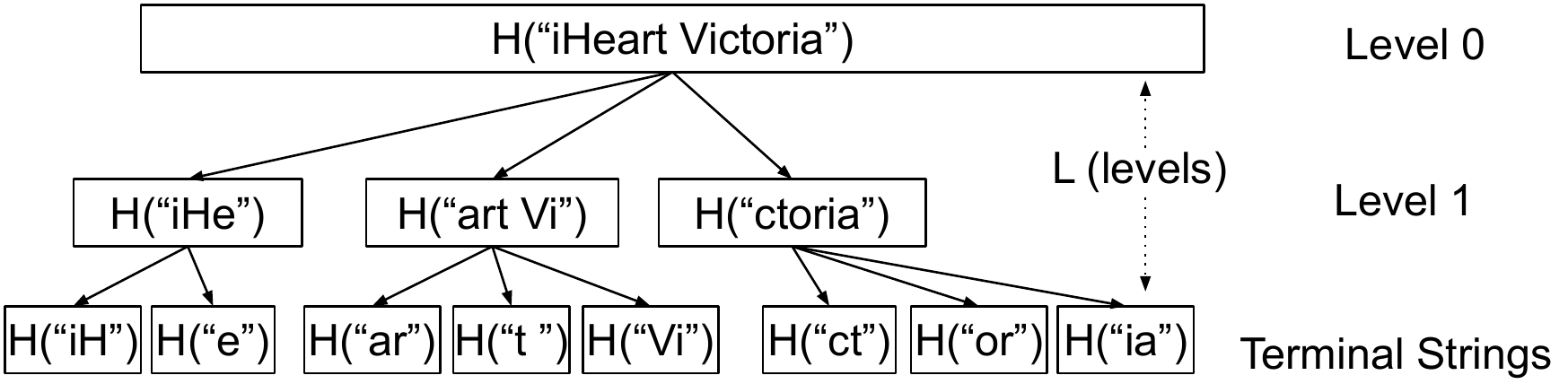}
    \caption{An example of content-dependent partition tree.}
    \label{fig:Partition_Tree}
\end{figure}

The partition tree tolerates changes to the input string without affecting most of the partitions. For example, if we change the phrase to ``Heart Victoria" with the ``i" removed from the front. According to Figure \ref{fig:rolling_hash}, the edits to the hash values array would be removing the first hash value. Figure \ref{fig:content_dep_partition} shows that this change would not affect the relative partition positions, and the resulting partitions would be ``He", ``art Vi", and ``ctoria". If we extend our recursive partition level, we also expect slight changes to the subpartitions in the first branch. Since partition decisions for other branches are isolated after the first level, the rest of the partition tree should stay the same. This property not only shows how small differences between two reconciling strings are likely to generate similar partition trees, but also allows maintaining a partition tree for incremental editing. 


\subsubsection{Converting a Partition Tree into a Shingle Set}
The last step is to transform the partition tree into a multiset of shingles before we use a set reconciliation algorithm to solve the reduced problem. We start by constructing a weighted de Bruijn digraph at every level using a 2-shingling method modified from \cite{ari_bandwidth_2006}. The edge weight describe the number of times a partition is followed by another. For each vertex, we create a shingle to capture the hash values of the previous node, the hash value of itself, and its number of occurrences.


To create a hash shingle multiset, we insert all shingles from a partition tree into a set and append their level number in the partition tree. For example, Table \ref{tab:multiset} shows the hash shingle multiset for our hash partition tree in Figure \ref{fig:Partition_Tree}. We reconcile this multiset with the one on the other reconciling host to obtain a new multiset of shingles. In order to reconstruct the string from the other host, we need to exclude the shingles not known to the other host from our multiset.

\begin{table}[!t]
    \centering
    \begin{tabularx}{\linewidth}{|X|X|X|}
    \hline
        0:H(``iHeart Victoria"):1:0 & H(``art Vi"):H(``ctoria"):1:1 & H(``iHe"):H(``art Vi"):1:1 \\
        \hline
        0:H(``iHe"):1:1 & 0:H(``iH"):1:2 & H(``iH"):H(``e"):1:2 \\ 
        \hline
        0:H(``ar"):1:2 & H(``ar"):H(``t "):1:2 & H(``t "):H(``Vi"):1:2 \\
        \hline
        0:H(``ct"):1:2 & H(``ct"):H(``or"):1:2 & H(``or"):H(``ia"):1:2 \\
        \hline
    \end{tabularx}
    \caption{An example of a multiset of hash shingles from the partition tree in Figure \ref{fig:Partition_Tree}.}
    \label{tab:multiset}
\end{table}

\subsection{Backtracking}
\label{sec:backtrack}
The backtracking starts after both reconciling hosts possess the equivalent shingle multiset. To reconstruct a string, we need to backtrack the hash shingles from the lower partition level using an exhaustive method \cite{reingold1977combinatorial_backtrack}. Our exhaustive backtracking method uses ``brute force" to trace through all possible Eulerian paths of a de Bruijn digraph in lexicographic order. Our Eulerian path follows edges with positive weights and reduces weights upon crossing until it reaches a defined depth. As an example from Figure \ref{fig:Partition_Tree}, if host Bob would like to help Alice to obtain the partition string ``iHe", he has to first compute its \emph{composition information} including the head partition hash H(``iH"), number of partitions $2$, and Eulerian tracing number $1$. Bob uses the \emph{Depth First Search} algorithm starting from the head shingle 0:H(``iH"):1:1 and searches for the next potential partitions. Bob increments the Eulerian tracing number every time he incorrectly traces to $2$ shingles. After receiving the \emph{composition information} from Bob, Alice can then trace through the shingles in the same manner.

\subsection{The Main Protocol}
\label{sec:protocol_description}
We now describe the RCDS protocol for hosts Alice and Bob who wish to reconcile their string $\sigma_a$ and $\sigma_b$ respectively. 
\begin{enumerate}\label{protocol}
    \item Hosts Alice and Bob create their \emph{hash partition} trees from $\sigma_A$ and $\sigma_B$ using the same parameters and keep dictionaries for hash-to-string conversions. \label{step:creating_partition_tree}
    
    \item Both Alice and Bob export their partition trees into two multisets of shingles $S_A$ and $S_B$ respectively. \label{step:tree_to_shingle_set}
    
    \item Using a set reconciliation protocol such as \emph{Interactive CPI}, Alice extracts the symmetrical differences between the hash shingle sets. \label{step:set_recon}
    
    \item Alice remaps her partition tree with Bob's shingles. \label{step:shingle_set_to_tree}
    
    \item Alice extracts unknown hashes by comparing against her local dictionary, and requests information from Bob. \label{step:request_unknown}
    
    \item Bob responds to the hash value inquires of \emph{terminal strings} by sending the strings as literal data. For non-terminal hash inquires, Bob finds their \emph{composition information} and send them to Alice. \label{step:answer_unkonwn}
    
    \item Alice computes hash values of the literal data for her dictionary and uses the composition information to reconstruct all other unknown partitions from the bottom up. The last string reconstruction at the top level should give Alice exactly $\sigma_b$. \label{step:reconstructing_string}
    
    \item If Bob wishes to obtain the string from Alice as well, he can go through the same procedure from step \ref{step:request_unknown} at the same time as Alice. \label{step:2_way_recon}
\end{enumerate}

\section{Analysis}
\label{sec:Analysis}

In the following analysis, we refer to all steps from Section \ref{sec:protocol_description} reconciling two similar strings with edit distance $d$. We use $L$ to denote the number of recursive partition levels, $p$ to represent the maximum of number partitions at each recursion, $h$ as the minimum inter-partition distance, and $N$ to be the larger input string size of the two reconciling strings. Since we can not control the terminal string size directly, we denote the average terminal string size for the partition trees as $T$.

\subsection{RCDS Complexities}
The following describes the typical-case in which we disregard the possibility of partition mismatch and its cascading effects. The partition mismatch happens when some number of string edits inside a partition changes the partition end-point compared to its counterpart causing the next partition to start at a different position in the string.

The communication complexity is $O(dL\alpha+dT)$ bytes, where $\alpha$ is the overhead constant for the chosen set reconciliation protocol. The main cost consists of set reconciliation from Step \ref{step:set_recon} which reconciles $O(dL)$ number of symmetric differences for the hash shingle set, and transferring of terminal strings from Step \ref{step:answer_unkonwn}.

The space complexity is $O(p^{L} + dT)$ bytes with fixed-length hash values. In Step \ref{step:creating_partition_tree}, we would create $O(p^L)$ number of partitions in a partition tree and transfer into a multiset of shingles in Step \ref{step:tree_to_shingle_set}, requiring three times the space of a partition tree in total. Then, in Step \ref{step:answer_unkonwn}, Alice receives $O(dT)$ bytes of literal data for unknown terminal hashes. 

The time complexity is $O(N\lg(h)L + dL\gamma)$ plus the set reconciliation time, where $\gamma$ is the exhaustive backtracking time reconstructing orders for partitions of each recursion. Assuming the hash computation takes constant time, each level of a $L$-level partition tree requires $O(N\lg(h))$ time to partition, moving a window of size $2h+1$ through $N-w+1$ hash value array. At each window shift, RCDS takes $O(\lg(h))$ time to compute the local minimum by maintaining a balanced Binary-Search Tree.

\subsubsection{Backtracking Time}
\label{sec:backtrack_analysis}
In Steps \ref{step:answer_unkonwn} and \ref{step:reconstructing_string}, Alice and Bob perform the similar tasks of reconstructing strings from shingles which takes $O(dL\gamma)$ time. The protocol reduces the exhaustive backtracking time to a small value, $\gamma$, in the expected case. Each string partition reconstruction would take $O(p^{D})$ time, where $D$ is the maximum degree in the de Bruijn digraph. Using Depth First Search, we require $p$ number of traversal steps, and in every step, we can have up to $D$ potential paths. The degree, $D$, corresponds to the largest number of occurrences of any partition within a partition level. Provided no duplicated partition exists, we would have the best case performance of $\gamma = p$.

In syntactic content such as programming scripts, $D$ could be a large number due to high recurrence of specific syntactic substrings. We can predict the computation cost after constructing the partition tree by determining $D$ and change the hash function for an alternative partition tree that has fewer partition duplicates if necessary.

\subsection{Partition Probability}
The characteristic of our \emph{local minimum chunking} method depends on its cutting parameters including minimum inter-partition distance $h$, rolling hash window size $w$, and content hash space $s$. We define cut-points as the two endpoints of a partition and adopt an analysis for strict maximum chunking method from \cite{Content_dependent_Partition}.

The rolling hash converts $w$ number of consecutive string characters to a hash value. The array of hash values should occupy the entire hash space and avoid biased partition decisions from reoccurring string content. In any type of strings, we would create more unique hashes by hashing longer substrings. Generally, we want $|\Sigma|^w>>s$, where $|\Sigma|$ is the size of string content alphabet, so that the intended space $s$ is filled. 

The choice of $h$ and $s$ directly affects the probability of partition, as described in Section \ref{sec:local_min_chunking}. In Lemma \ref{lemma:partition_prob}, we discuss the probability of an arbitrary point in a string to be a non-strict local minimum potential cut-point. If the probability is too high, we lose the uniqueness of each cut-point. In the worst-case scenario where all content hashes have the same value making every point to be a potential cut-point, our partition algorithm would be no different than a fixed-size partition method with a block size of $h$. On the other hand, if the probability is too small, we would have no partition at all. In general, we want $O(p)$ number of potential cut-points which is roughly equal to \emph{string size * cut-point probability}.

\begin{lemma}[\cite{Content_dependent_Partition}, Remark 57]
\label{lemma:partition_prob}
Given an independent and identically distributed (i.i.d.) number array $A$, where $A[j] \in \{0,1,\dots, s-1\}$,$\ j=\{0,1,\dots,2h\}$, and $s,h \in \mathbb{N^+}$. For any $k \in \{0,\dots,2h\}$, the probability that $A[k]\leq A[j]$ for all $k\neq j$ is given by:
\begin{equation}
    p = \sum_{j=0}^{s}\frac{1}{s}\left(\frac{j}{s}\right)^{2h}.
    \label{eq:partition_prob}
\end{equation}
\end{lemma}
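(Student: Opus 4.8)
The plan is to compute the probability directly by conditioning on the value realized at the distinguished index $A[k]$, exploiting the i.i.d.\ structure to make the remaining entries independent competitors. First I would fix $A[k]=v$ for some $v\in\{0,\dots,s-1\}$; since $A[k]$ is uniform over the $s$ symbols, this occurs with probability $1/s$. Conditioned on this, the event $\{A[k]\le A[j]\text{ for all }j\neq k\}$ reduces to requiring each of the remaining entries to fall in $\{v,v+1,\dots,s-1\}$, a set of $s-v$ admissible symbols. Because the entries are independent and uniform, each competitor satisfies $A[j]\ge v$ with probability $(s-v)/s$, and the constraints factor across the $2h$ competitors, yielding conditional probability $\left(\frac{s-v}{s}\right)^{2h}$.

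Next I would marginalize over $v$ via the law of total probability, obtaining
\begin{equation*}
p=\sum_{v=0}^{s-1}\frac{1}{s}\left(\frac{s-v}{s}\right)^{2h}.
\end{equation*}
To recover the stated form I would substitute $j=s-v$, which ranges over $\{1,\dots,s\}$ as $v$ ranges over $\{0,\dots,s-1\}$, giving $\sum_{j=1}^{s}\frac{1}{s}\left(\frac{j}{s}\right)^{2h}$. Finally I would observe that appending the $j=0$ term changes nothing, since $0^{2h}=0$ for $h\ge 1$, so the summation may equivalently be written from $j=0$ to $s$ exactly as in \eqref{eq:partition_prob}.

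The computation is elementary; the only point requiring genuine care is the \emph{non-strict} inequality, which permits ties for the minimum. I must confirm that the defining event is $\{A[k]\le A[j]\}$ and not a strict inequality, so that realizations in which several entries simultaneously attain the minimum value are all counted — this is precisely what makes the admissible set $\{v,\dots,s-1\}$ of size $s-v$, rather than $\{v+1,\dots,s-1\}$ of size $s-v-1$, the correct one for each competitor. I would also double-check the bookkeeping that there are exactly $2h$ competing indices, since $A$ is indexed $0$ through $2h$ and therefore has $2h+1$ entries; an off-by-one here would corrupt the exponent. Beyond these two verifications no deeper obstacle is anticipated, as the independence assumption removes any correlation among the competitors and lets the product over them collapse cleanly.
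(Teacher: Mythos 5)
Your proposal is correct and follows essentially the same route as the paper's own proof: condition on the value $v$ attained by $A[k]$ (uniform, probability $1/s$ each), use independence to get the factor $\left(\frac{s-v}{s}\right)^{2h}$ for the $2h$ competitors under the non-strict inequality, and sum by total probability. You are in fact slightly more careful than the paper about the final reindexing $j=s-v$ and the harmless inclusion of the vanishing $j=0$ term, which the paper leaves implicit.
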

\begin{proof}
We consider an arbitrary value $A[k]$ within the number array of $2h+1$ values to be a non-strict minimum value if all other numbers are equal or larger than said value. For example, if $A[k] = 0$ is a minimum value in an array of numbers in the range of $[0,s-1]$ inclusive, then all $2h$ other values would have to choose from values that are 0 or above. The probability for all $2h$ other i.i.d. values being larger than or equal to 1 is $\left(\frac{s-0}{s}\right)^{2h}$, because all of these $2h$ values can be a value from $0$ to $s-1$. Since there is an equal chance, $\frac{1}{s}$, for $A[k]$ to be any number between $0$ to $s-1$ inclusive, we sum up all the probabilities for $A[k] \in \{0,\dots,s-1\}$ while $2h$ other values being greater or equal to $A[k]$:
\begin{equation}
    \frac{1}{s}\left(\frac{s-0}{s}\right)^{2h}+\frac{1}{s}\left(\frac{s-1}{s}\right)^{2h}+\dots+\frac{1}{s}\left(\frac{s-s}{s}\right)^{2h}.\label{eq:non_strictmin_expan}
\end{equation}
\end{proof}

Since $h \approx \frac{N}{p}$ where $N$ is the string size and $p$ is the maximum number of partitions at each recursion, we use $p$ to decrease $s$ and $h$ as we move down to lower recursive partition levels. We apply the same $s$ and $h$ for all partitions in the same recursion level to avoid parameter mismatch.

\subsection{Weaknesses}
\label{sec:worst_case_analysis}
For partition trees, both partition mismatch and duplication can introduce a large amount of communication and computation cost. Partition mismatch can result from string edits or cascading effects of partition mismatch from the preceding or the upper-level partitions. As for partition duplicates, they increase string reconstruction time exponentially.

\subsubsection{Sparse String Edits}
RCDS is most effective when edits are concentrated in one location since all unmatched terminal strings must be transferred by the host. If the string edits are spread across all terminal strings, they would all have to be sent as unmatched data.

Moreover, for every unmatched non-terminal partition, we are required to compute its string \emph{composition information}. We would have to reconstruct all internal string partitions in the partition tree as well.

\subsubsection{Cascading Partition Mismatch}
\label{sec:mismatch_partition}
Partition mismatch happens where two partitions share a content relative start point but end differently due to string edits. The effect of partition mismatch can get carried over to its subsequent partitions causing cascading mismatches. Given $d$ amount of edit distance between the reconciling strings, we could have $O(dw)$ number of differences between the content hash arrays. New values could affect partition decisions if they are the strict minimum or the left most non-strict minimum value within $h$ positions of a cut point.


\subsection{Failure Modes}
\label{sec:success_rate}
The only causes of failure for RCDS are hash collisions and set reconciliation failures.

\begin{theorem}
The upper-bound probability of failure using \emph{Interactive CPI} \cite{practical_set_reconciliation_ari_2002} as the set reconciliation protocol is:
\begin{equation}
    \left[1-exp\Bigg(-\frac{(n+1)^2}{2(2^{b}+1-n)}\Bigg)\right] + \Bigg(\frac{n-1}{2^{2b}}\Bigg)^k
    \label{eq:failure_modes}
\end{equation} where $n$ is the maximum total number of partitions created by both reconciling hosts, $k$ is the CPI redundancy factor, and $b$ is the length of the hash function in bits.
\end{theorem}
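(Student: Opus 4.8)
The plan is to split the failure event into the two independent sources already identified—hash collisions and set-reconciliation failures—and to bound the total failure probability by their sum via a union bound. The crucial preliminary observation is that, conditioned on the hash function being injective over the partitions actually encountered (no collision) \emph{and} on the set-reconciliation step delivering the correct symmetric difference, the rest of the protocol is deterministic and exact: the backtracking of Section \ref{sec:backtrack} reconstructs each partition string by tracing the unique Eulerian path singled out by the \emph{composition information} (head hash, partition count, and Eulerian tracing number) supplied by the other host. Hence reconstruction contributes no further failure probability, and it suffices to bound the two terms separately and add them.

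For the first term I would model the $n$ partition hashes produced by both hosts as i.i.d.\ uniform draws into the $2^{b}$ cells of the $b$-bit hash range and bound the probability that two distinct partitions collide. Writing the no-collision probability as $\prod_{i=0}^{n-1}\bigl(1-\tfrac{i}{2^{b}}\bigr)$, applying the elementary inequality $\ln(1-x)\ge -\tfrac{x}{1-x}$ termwise, and using $2^{b}-i \ge 2^{b}+1-n$ for every $i\le n-1$, yields
\begin{equation*}
    \prod_{i=0}^{n-1}\left(1-\frac{i}{2^{b}}\right)\ \ge\ \exp\!\left(-\frac{n(n-1)}{2\,(2^{b}+1-n)}\right).
\end{equation*}
Loosening the numerator $n(n-1)$ to $(n+1)^{2}$—valid since $(n+1)^{2}\ge n(n-1)$—and taking the complement reproduces exactly the first bracketed term as an upper bound on the collision probability.

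For the second term I would invoke the failure analysis of \emph{Interactive CPI} \cite{practical_set_reconciliation_ari_2002}. Each shingle concatenates two $b$-bit hashes, so the shingle identities live in a field of size $2^{2b}$; the at-most $n-1$ reconciled elements are encoded as roots of a characteristic polynomial, and reconciliation fails only if an interpolated ratio of characteristic polynomials spuriously agrees with the correct one at all sampled evaluation points. By the degree-over-field-size bound, a spurious agreement at a single uniformly random point occurs with probability at most $\tfrac{n-1}{2^{2b}}$; since the $k$ redundant points are independent, all of them coincide simultaneously with probability at most $\bigl(\tfrac{n-1}{2^{2b}}\bigr)^{k}$. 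Adding this to the collision bound through the union bound gives \eqref{eq:failure_modes}.

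The main obstacle I anticipate lies in the constants rather than the architecture. On the hashing side, $(n+1)^{2}$ is not the tightest birthday estimate—the natural derivation produces $n(n-1)$—so the presentation must justify $(n+1)^{2}$ as a deliberate but valid over-estimate and note that the bound is meaningful only for $n\le 2^{b}$ (benign for $b=64$). On the reconciliation side, the delicate point is importing the per-point failure probability $\tfrac{n-1}{2^{2b}}$ correctly from the Interactive CPI analysis: one must match the field size $2^{2b}$ to the two-hash shingle encoding and verify that the redundancy factor $k$ enters multiplicatively through the independence of the sampled evaluation points.
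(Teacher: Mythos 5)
Your proposal is correct and follows essentially the same route as the paper: a union bound over the two failure sources, with the first term being the fixed-length ($b$-bit) hash-collision bound over the at most $n$ partitions and the second being the Interactive CPI failure rate over the $2^{2b}$-bit shingle space with redundancy $k$. The paper simply cites both terms from \cite{hash_match_probabilities} and \cite{practical_set_reconciliation_ari_2002} and justifies the parameter substitutions ($2b$-bit shingles, at most $n$ elements), whereas you additionally derive them; the substance is the same.
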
 
\begin{proof}
The first term in Equation \ref{eq:failure_modes} is the hash collision rate for fixed-length hash functions~\cite{hash_match_probabilities} and the second term is the failure rate of \emph{Interactive CPI}~\cite{practical_set_reconciliation_ari_2002}. We can swap these terms accordingly if we change the underlying hash function or the set reconciliation protocol. RCDS uses set reconciliation to reconcile differences between two hash shingle multisets. The hash shingles consist of all two adjacent partition hashes in a partition tree, therefore, each shingle is $2b$ bits. The total number of elements in our multiset is at most the total number of partitions $n$ in a partition tree, and the size of hash values is $b$ bits.
\end{proof}

\section{Experiments and Evaluation}
\label{sec:evaluation}
We implement~\cite{RCDS_implementation} RCDS and test against random strings, random collections of books from Project Gutenberg \cite{gutenberg}, and popular programming scripts from GitHub. We set up two reconciling hosts as a parent and a child process communicating via a socket on one piece of commodity hardware (iMac 2012 with 2.9GHz quad-core Intel Core i5 processor and 8GB of 1600MHz DDR3 memory). To control the edit distance between the two reconciling strings, we generate a similar string by randomly inserting or deleting substrings of random length from parts of the original copy controlling the overall upper-bound on the \emph{edit burst distance} to mimic human edits. In the following experiments, we include 1000 observations with $95\%$ confidence intervals.

\begin{figure*}[!t]
    \centering
    \begin{subfigure}{0.33\textwidth}
        \centering
        \includegraphics[width=\textwidth]{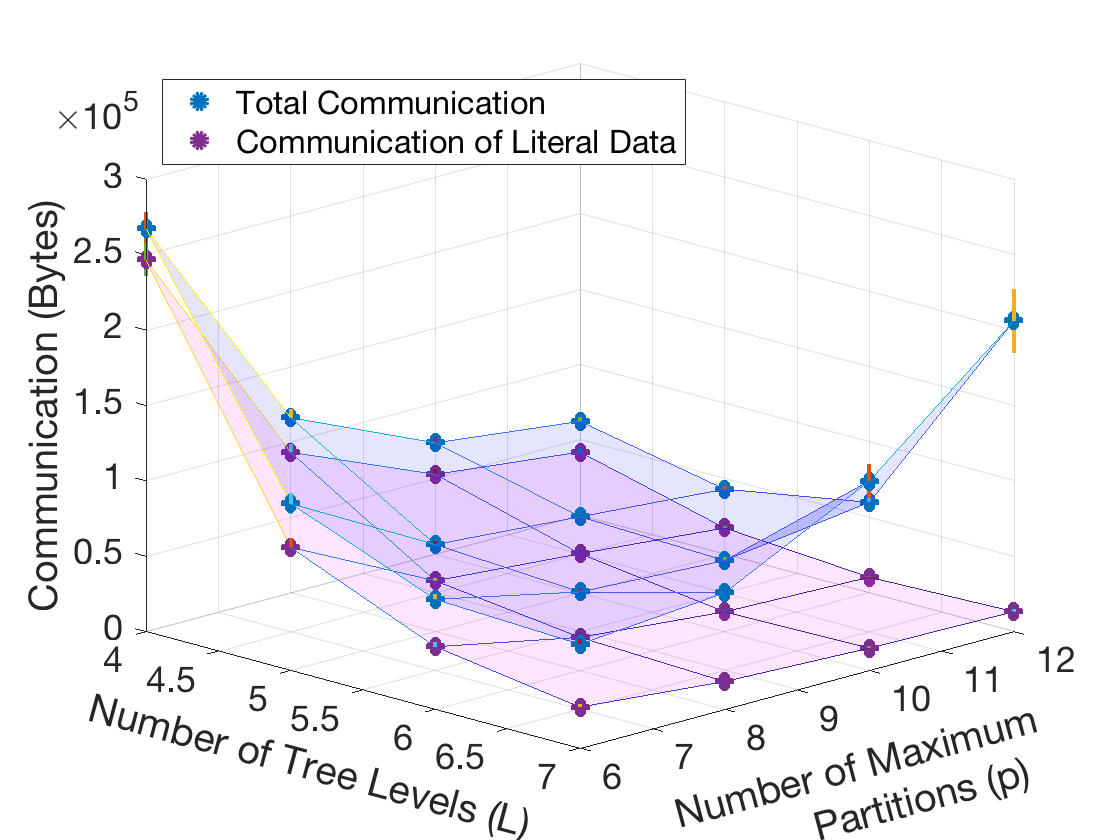}
        \caption{Communication under various parameters.}
    \label{fig:tree_comm}
    \end{subfigure}
     \begin{subfigure}{0.33\textwidth}
        \centering
        \includegraphics[width=\textwidth]{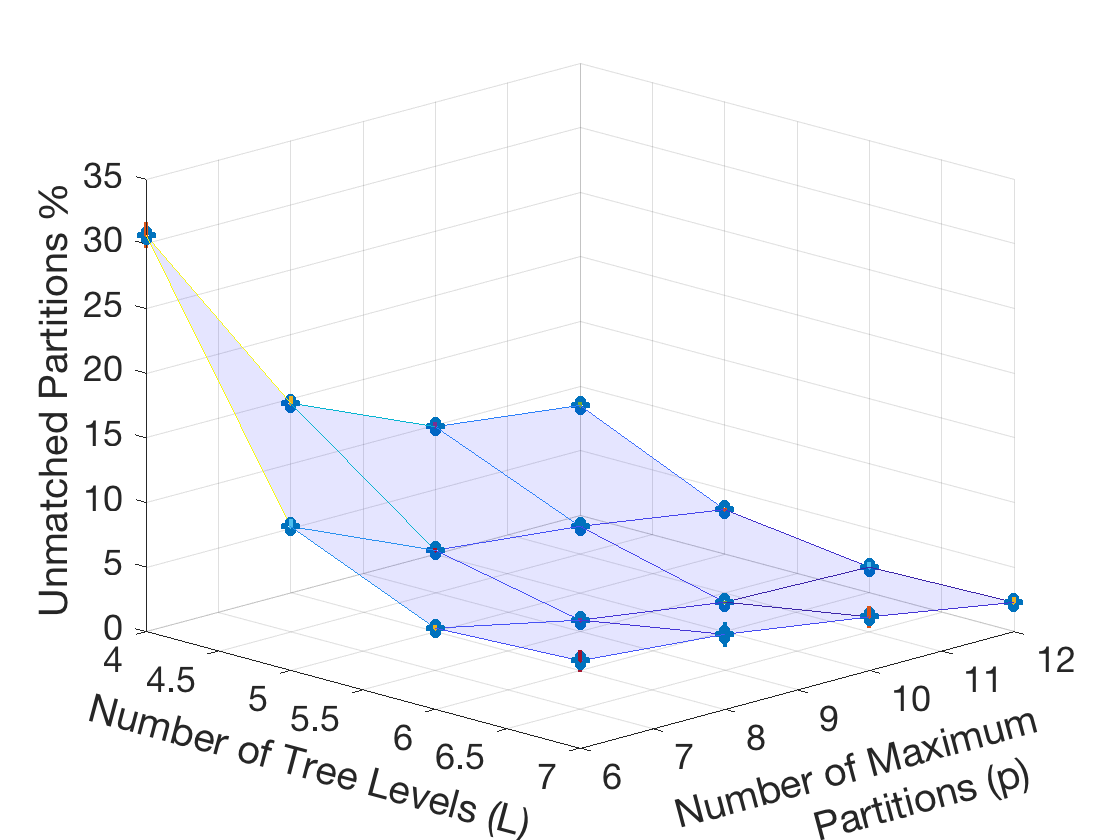}
        \caption{Percentage of unmatched tree partitions.}
        \label{fig:tree_partition_percent}
    \end{subfigure}
         \begin{subfigure}{0.32\textwidth}
        \centering
        \includegraphics[width=\textwidth]{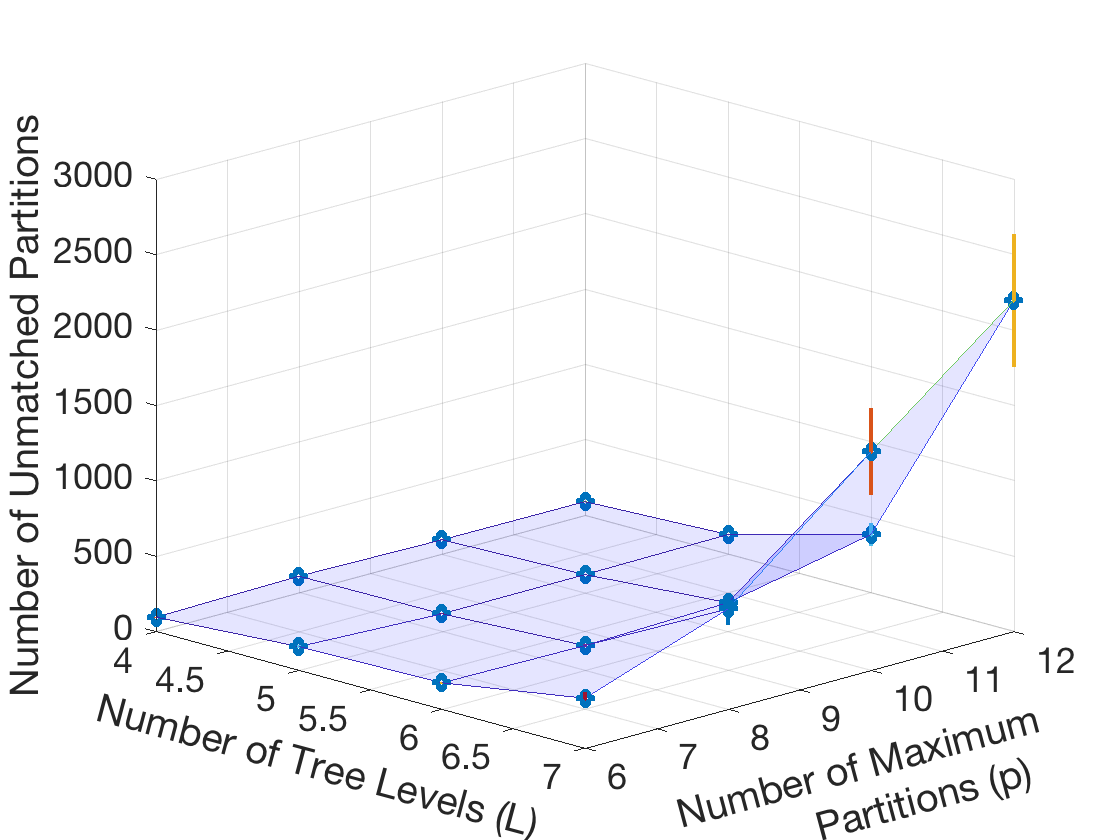}
        \caption{Number of unmatched tree partitions.}-
        \label{fig:tree_partition_num}
    \end{subfigure}
    \caption{Evaluating RCDS performance with different parameters.}
    \label{fig:diff_tree}
\end{figure*}

\subsection{Varying Protocol Parameters}
\label{sec:tuning_parameters}
We randomly select strings of $10^6$ characters and create a copy each time with an upper-bound of $10^3$ edit burst distance. We observe the performance trend by changing the partition tree parameters. The size of the partition tree grows with respect to the increase of maximum number of partitions and number of recursion levels. In Figure \ref{fig:diff_tree}, the partition tree size grows from left to right across the x and y-axis.

For RCDS communication cost (Figure \ref{fig:tree_comm}), we see a sharp decline for the amount of literal data transfer as the size of the partition tree grows. Figure \ref{fig:tree_partition_percent} shows the percentage of partition differences dropping in a similar manner. In other words, most of the partitions match with their counterparts which reduce the amount of literal data transfer. Combining the two graphs, we see the result of partition hierarchy successfully isolating string edits while matching unchanged substrings to lower the communication cost.

Unfortunately, Figure \ref{fig:tree_comm} shows a tail raise for the total communication cost as the partition tree grows, leaving the middle ground as the optimal area. This tail raise is due to set reconciling resolving more number of partition differences as shown in Figure \ref{fig:tree_partition_num}.



\begin{figure*}[!t]
    \centering
     \begin{subfigure}{0.33\textwidth}
        \centering
        \includegraphics[width=\textwidth]{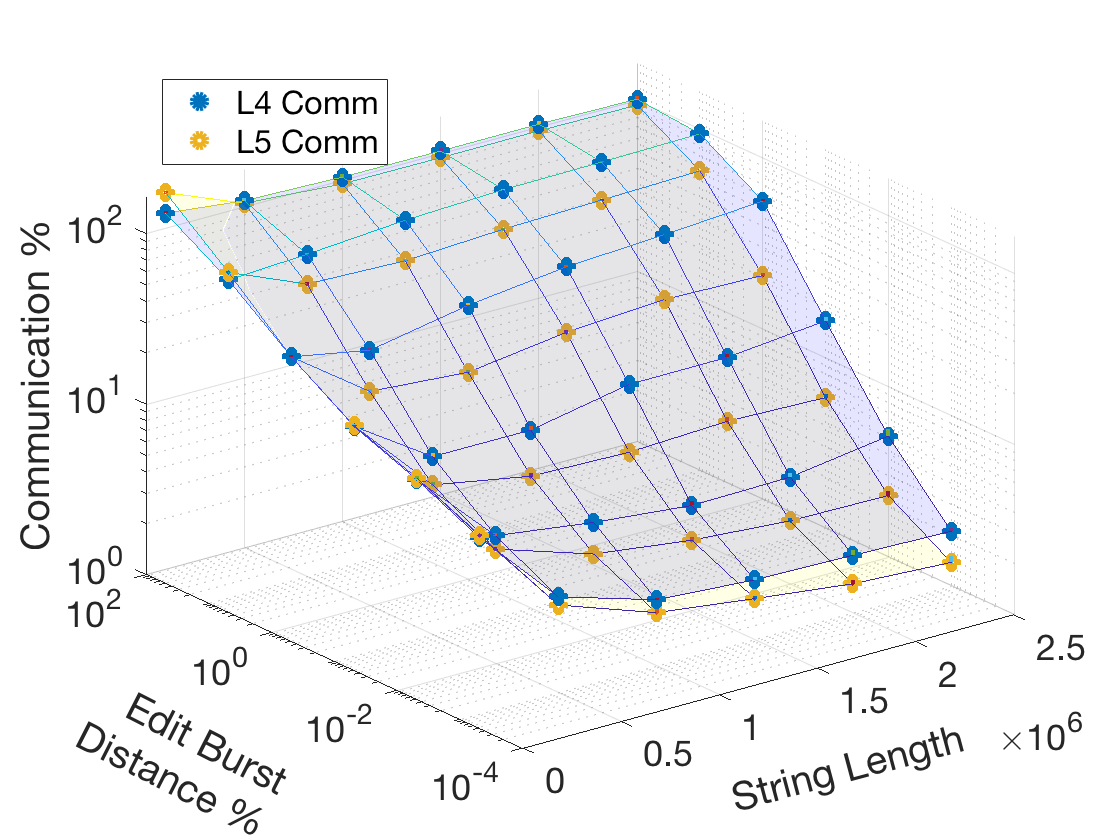}
        \caption{Communication cost percentage for different inputs.}
        \label{fig:comm_comp}
    \end{subfigure}
    \begin{subfigure}{0.33\textwidth}
        \centering
        \includegraphics[width=\textwidth]{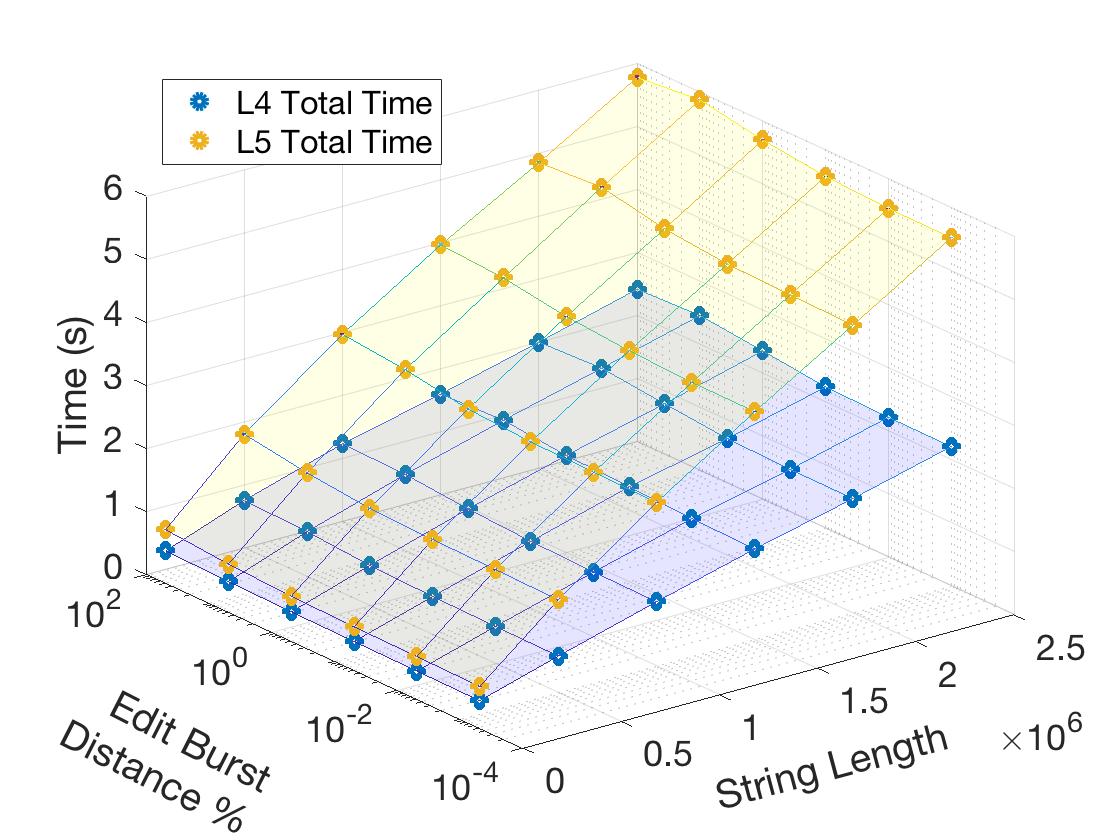}
        \caption{Time cost of different inputs. }
        \label{fig:time_comp}
    \end{subfigure}
    \begin{subfigure}{0.32\textwidth}
        \centering
        \includegraphics[width=\linewidth]{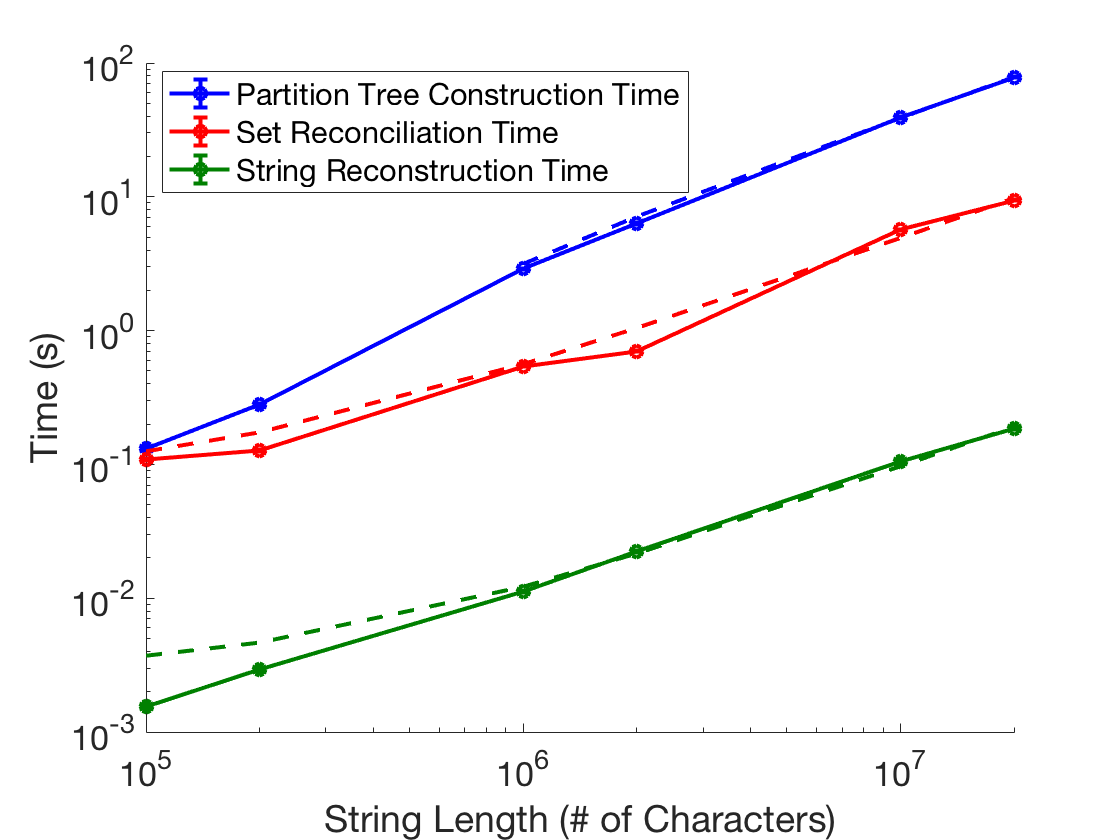}
        \caption{Time per operations with dashed linear fit.}
        \label{fig:perf_diff_file_time_fit}
    \end{subfigure}
    \caption{Evaluating RCDS performance with different input strings.}
    \label{fig:diff_input}
\end{figure*}

\subsection{Varying String Inputs}
\label{sec:par_tree_evaluation_analysis}
We compare the performance of RCDS under fixed-level partition trees while changing the reconciling string size in the range of $[10^5,\  2*10^6]$ characters and edit burst distance $[10^{-3},\ 100]$ in percentage of the original string size in Figure \ref{fig:diff_input}. We fix the number of maximum partitions at each recursion $p=8$ and levels of partition recursion $L = 4,5$.

Figure \ref{fig:comm_comp} shows the percentage of RCDS communication cost over the size of reconciling strings. For shorter strings, the protocol introduces a relatively large amount of overhead from reconciling multiset of hash shingles. The overhead becomes less significant as the size of reconciling string increases. The general trend of communication cost increases as the edit burst distance grows while staying almost constant to changes in the input string size. We also see the communication cost percentage of the 4-level partition tree well below that of the 5-level partition tree on a logarithmic scale. This is because the extra partition level helped to break partitions further and created more matching terminal partitions to reduce the communication cost.


The construction of partition trees dominates the RCDS time cost. Figure \ref{fig:time_comp} shows that more recursion levels take longer time to compute. We can reduce this time by using parallel sorting algorithms such as \cite{rsyncFromStringRecon}. Combining \Cref{fig:comm_comp,fig:time_comp}, we can see the adjustable trade-off between the communication and computation cost of RCDS.

We enlarge the range for input string length and fix the edit burst distance at $1000$ to elaborate on the time cost for each operation in Figure \ref{fig:perf_diff_file_time_fit}. The time costs for constructing partition trees, reconciling set differences, and reconstructing string from hash shingles, are plotted individually with their color-coded linear fits. The time cost of each operation generally follows a linear trend with partition tree construction dominating the cost as the input string length grows.


\section{Comparison to Rsync}
\label{sec:compare_existing_work}

We compare the performance of RCDS protocol to that of the \emph{rsync} utility (version 2.6.9) by reconciling a similar but much larger set of data using the same setup as the previous experiments in Figure \ref{fig:all_repo_perf}. The goal of this comparison is to provide a performance benchmark for two important intended use cases, being the synchronization of single files, and the synchronization of full git repositories.

\subsection{Single Files}
For single files, we show a comparison of the communication cost synchronizing files of different length under 100 random burst edits. Shown in Figure \ref{fig:vs_rsync_comm}, \emph{rsync} has a communication cost linear to the reconciling string size, whereas RCDS has a sub-linear communication cost to the increasing input string length. 
\begin{figure*}[!t]
    \centering
    \begin{subfigure}{0.33\textwidth}
    \centering
    \includegraphics[width=\linewidth]{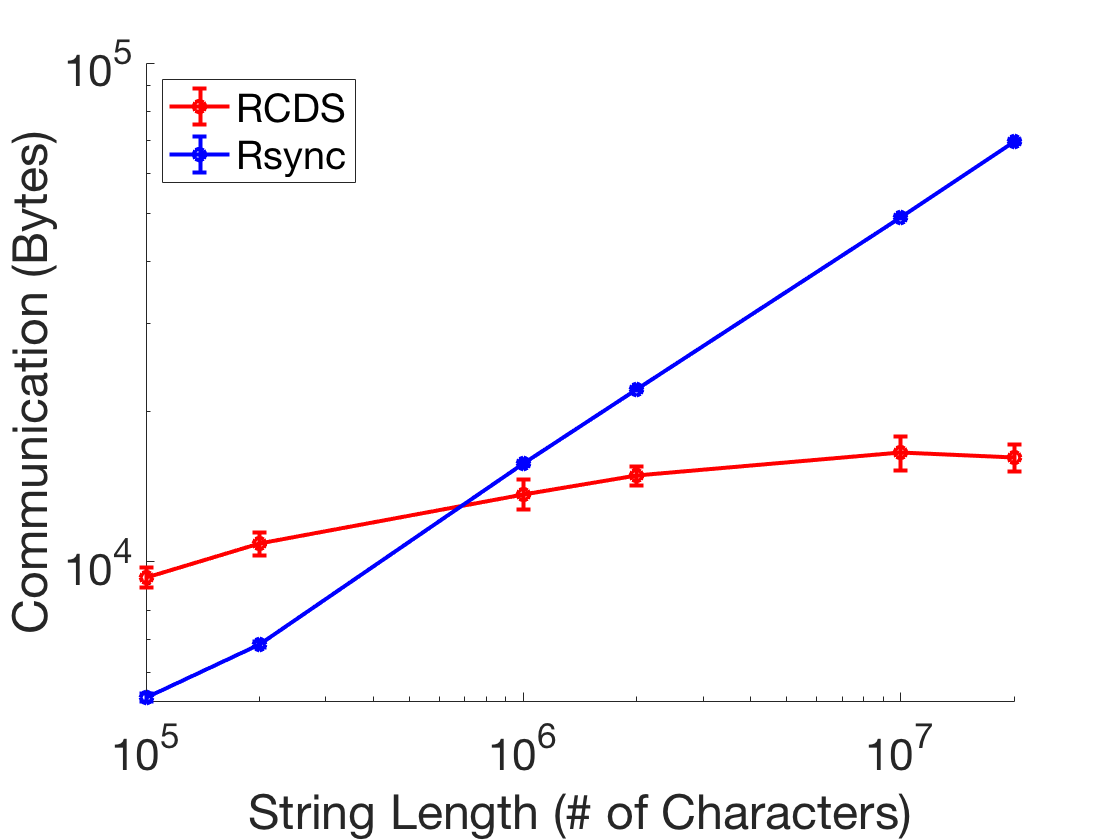}
    \caption{Reconciling single files.}
    \label{fig:vs_rsync_comm}
    \end{subfigure}
    \begin{subfigure}{0.32\textwidth}
        \centering
        \includegraphics[width=\linewidth]{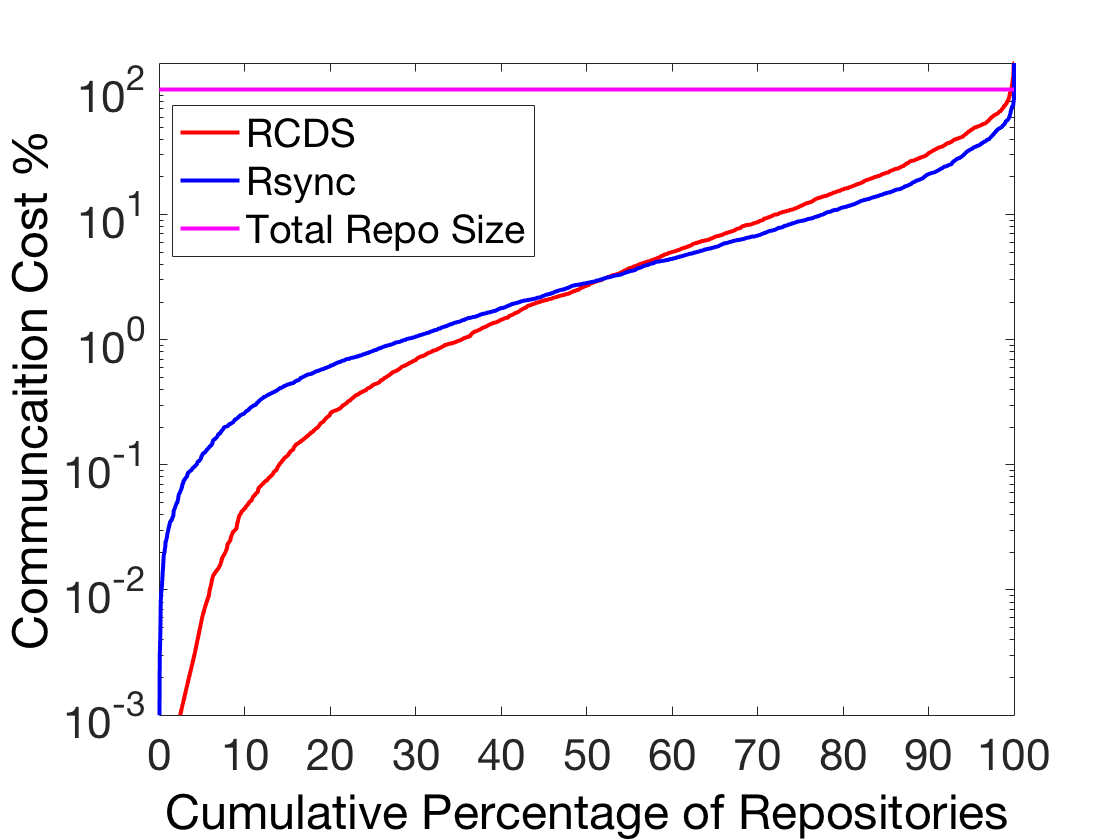}
        \caption{Synchronizing git repositories.}
        \label{fig:repo_perf}
    \end{subfigure}
    \begin{subfigure}{0.32\textwidth}
        \centering
        \includegraphics[width=\linewidth]{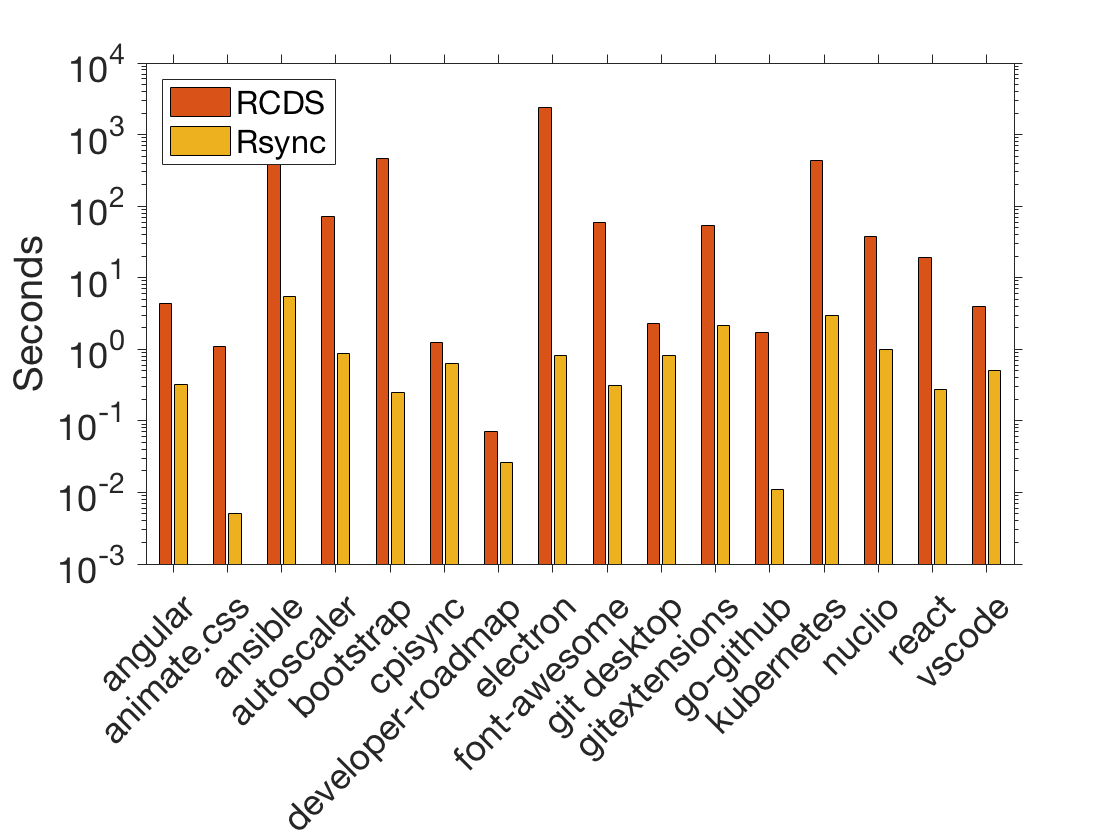}
        \caption{Time for synchronizing git repositories.}
        \label{fig:foldersync_time}
    \end{subfigure}
    \caption{Comparing performance of RCDS and \emph{rsync} synchronizing single files and git repositories.}
    \label{fig:all_repo_perf}
\end{figure*}

We extend to different burst edit distances and use a linear polynomial surface model to fit the planes of communication costs. The R-Square values are 0.9645 for RCDS and 0.9971 for \emph{rsync}. We calculate the equilibrium line for their total communication as
\begin{equation}
    y = 1.71x-17.17,
\end{equation} where $y$ is the edit burst distance and $x$ is the input string size. Given an input string size, RCDS performs better than \emph{rsync} if the edit burst distance is less than that of the equilibrium.

\subsection{GitHub Repositories}
Gitlab \cite{gitlab} is one of the widely used DevOps lifecycle tools that uses \emph{rsync} to manage repositories between different versions and states. We compare the performance characteristics of our protocol to that of the \emph{rsync} utility~\cite{rsyncWeb}. We utilize a basic heuristic to accommodate folder synchronization by checking file name and file size to determine if a file is different. Particularly, we hash each file name concatenating with its file size to a 64-bit hash value and use \emph{Interactive CPI} to reconcile the file name and size set. The newly created files are transferred in their entirety and files removed from the updated version are deleted. We also check the list of files that RCDS considers different with that of \emph{rsync} to ensure a fair comparison.


In \Cref{fig:repo_perf,fig:foldersync_time}, we present the communication and example time cost synchronizing the second latest to the latest release version of 5000 top starred public repositories, as of April 17th, 2019. In \Cref{fig:repo_perf}, the y-axis shows the percentage of the total repository size communicated for RCDS and \emph{rsync} to synchronize from the previous to the most recent version of repositories. The x-axis shows the percentile of repositories that require that communication cost or less. The graph shows that RCDS requires less communication than \emph{rsync} in 51\% of our tested repositories. For more than half of the repositories, RCDS performs significantly better than \emph{rsync} and for the rest \emph{rsync} only slightly outperforms RCDS.

\section{Conclusion}
\label{sec:conclusions}
We have presented a new string reconciliation protocol RCDS that exhibits both sub-linear communication complexity for many strings, and a computational complexity that is scalable to long strings.  In this way we have combined two features, which are not simultaneously available in existing solutions, toward a practical solution to this fundamental problem.  Our approach, which leverages context-based shingling and nearly optimal set reconciliation, improves upon the current standard - the heavily optimized \emph{rsync} utility - in a variety of scenarios.  Indeed, we have shown that reconciliation of different versions of popular git repositories can generally be completed with less communication using RCDS than with \emph{rsync}, and sometimes markedly so, although the converse is true for updates that involve large numbers of sparsely clustered edits.



We believe that our protocol mapply apply beyond incremental file updates to a wide variety of applications dependent upon presenting a consistent version of asynchronously edited ordered data, including distributed content distribution and metadata maintenance in distributed protocols.

\subsection{Future Work}
Our protocol utilizes many parameters, which can be tuned to specific applications.  These include the sizes and types of data, partitioning method, the underlying data structures, and set reconciliation protocol.  In addition, our partition tree currently fixes the maximum partition level, for ease of analysis, but one can similarly envision fixing the minimum length of a terminal string.  Furthermore, it would be interesting to extend our partition trees to handle incremental edits in the continuous file synchronization paradigm~\cite{wu2000continuous}. Finally, we envision RCDS being used in a hybrid scheme, which chooses which string reconciliation algorithm to run based on some initial, cheap assessment of the mutual properties of the strings involved.

\section*{Acknowledgments}
The authors would like to thank Sean Brandenburg for his assistance. This research is, in part, supported by the US National Science Foundation under Grants CCF-1563753.

\bibliographystyle{IEEEtran}
\bibliography{root}
%



\end{document}